\newcommand*{\gpos}{$G_{pos}($POS CNF$)$\xspace}
\newcommand*{\T}{Trudy\xspace}
\newcommand*{\F}{Fred\xspace} 
\newcommand*{\DB}{Dots \& Boxes\xspace}
\newcommand*{\true}{\textsc{true}\xspace}
\newcommand*{\false}{\textsc{false}\xspace}
\newcommand*{\org}{\textsf{or}\xspace}
\newcommand*{\crossover}{\textsf{crossover}\xspace}
\newcommand*{\wire}{\textsf{wire}\xspace}
\newcommand*{\variable}{\textsf{variable}\xspace}
\newcommand*{\clause}{\textsf{clause}\xspace}
\newcommand*{\cmax}{\ensuremath{\mathcal{C}_{\max}}\xspace}
\title{Dots \& Boxes is PSPACE-complete}
\author{Kevin Buchin}{Department of Mathematics and Computer Science, TU Eindhoven, Netherlands}{k.a.buchin@tue.nl}{https://orcid.org/0000-0002-3022-7877}{}
\author{Mart Hagedoorn}{Department of Mathematics and Computer Science, TU Eindhoven, Netherlands}{m.h.hagedoorn@student.tue.nl}{}{}
\author{Irina Kostitsyna}{Department of Mathematics and Computer Science, TU Eindhoven, Netherlands}{i.kostitsyna@tue.nl}{https://orcid.org/0000-0003-0544-2257}{}
\author{Max van Mulken}{Department of Mathematics and Computer Science, TU Eindhoven, Netherlands}{m.j.m.v.mulken@student.tue.nl}{}{}
\authorrunning{K. Buchin, M. Hagedoorn, I. Kostitsyna, M. v. Mulken }
\keywords{Dots \& Boxes, PSPACE-complete, combinatorial game}
\begin{document}

\maketitle

\begin{abstract}
  Exactly 20 years ago at MFCS, Demaine posed the open problem whether the game of \DB is PSPACE-complete. \DB has been studied extensively, with for instance a chapter in Berlekamp et al.~\emph{Winning Ways for Your Mathematical Plays}, a whole book on the game~\emph{The Dots and Boxes Game: Sophisticated Child's Play} by Berlekamp, and numerous articles in the \emph{Games of No Chance} series. While known to be NP-hard, the question of its complexity remained open. We resolve this question, proving that the game is PSPACE-complete by a reduction from a game played on propositional formulas.
\end{abstract}

\section{Introduction}

\emph{\DB} is a popular paper-and-pencil game that is played by two players on a grid of dots.
The players take turns connecting two adjacent dots.
If a player completes the fourth side of a unit box, the player is awarded a point and an additional turn.
When no more moves can be made, the player with the highest score wins the game.\footnote{For a visual explanation of the game see \url{https://youtu.be/KboGyIilP6k}, last accessed 6.5.2021}

Originally described in 1883~\cite{lucas1883recreations}, \DB has since received a considerable amount of attention in the research community.
In \textit{Winning Ways for Your Mathematical Plays}, Berlekamp, Conway, and Guy~\cite{winning_ways} were among the first to discuss a number of interesting mathematical properties of the game. 
Later, Berlekamp~\cite{berlekamp2000dots} wrote an entire book \emph{The Dots-and-Boxes game: Sophisticated Child's Play} about the game, in particular
discussing winning strategies in particular positions.
Since then, the mathematics of \DB and variants has been discussed in many papers and books~\cite{AICHHOLZER200542,albert2019lessons,berlekamp2000forcing,collette2015narrow,demainehearnplayagain,gpc,horiyama2017sankaku,johnson2014combinatorial,MeRo1988,Now1991,siegel2013combinatorial,Wes1996a}. There is also a rich body of work on solvers for \DB~\cite{barker2011solving,barker2012solving,buzzard2014playing,knittel2007concept,zhuang2015improving}.

Berlekamp et al.~\cite{winning_ways} argue that deciding the winner of a generalized version of \DB, called \emph{Strings-and-Coins}, is NP-hard. In this game, players take turns in removing edges of a given graph, scoring a point when they isolate a vertex.
When restricted to the dual graph of a square grid, this corresponds to a dual formulation of \DB. 
Eppstein~\cite{Epp} notes that the reduction given by Berlekamp et al.\ should extend to \DB, and a formal proof of the NP-hardness is given in~\cite{buchin_et_al:LIPIcs:2020:12237}.

Exactly 20 years ago at MFCS, Demaine posed the open problem whether \DB is PSPACE-complete~\cite{demaine2001playing}.
Bounded two-player games, like \DB, (that is, games in which the number of moves is bounded) naturally lie in PSPACE,
since a Turing machine using space polynomial in the board size is able to search the entirety of the game space. Often, these games are also PSPACE-hard~\cite{demaine2001playing}.
PSPACE-hardness of many bounded two-player games is shown by a reduction from \emph{Generalized Geography}, which is proven PSPACE-complete by Lichtenstein and Sipser~\cite{lichtenstein1980go}.
For example, the PSPACE-completeness of Reversi~\cite{reversi}, uncooperative UNO~\cite{uno}, and Tic-Tac-Toe~\cite{hsieh2007fairness} were shown by a reduction from Generalized Geography.
However, unlike \DB, the setting of Generalized Geography prescribes a stricter order on players' moves, making a reduction to \DB challenging to obtain.

In their seminal work, Hearn and Demaine~\cite{bhearn,gpc} introduce \emph{Constraint Logic}, a framework for analyzing complexity of games and puzzles.
Inspired by Flake and Baum's proof of Rush Hour~\cite{FLAKE2002895}, it specifies a type of game played on a constraint graph. The framework includes bounded/unbounded state spaces and single/two-player variations.
In the same work, Hearn and Demaine go on to provide a number of simpler reductions for various known PSPACE-complete games (including Rush Hour), as well as new proofs for several PSPACE-complete games.
However, the Constraint Logic framework is intended for proving hardness of partisan games (games in which the moves available to the two players are different), whereas \DB is not a partisan game.

Strings-and-Coins and the related game of \emph{Nimstring} were very recently (while we were preparing this submission) proven to be PSPACE-complete by Demaine and Diomidov~\cite{nimstring_pspace} by a reduction from a game on a DNF formula $G_{pos}($POS DNF$)$~\cite{schaefer}.
But, as they point out, their results do not apply to \DB, since the game positions they construct rely on multi-graphs (which additionally are neither planar nor have a maximum degree of 4). Specifically, they propagate signals through multi-edges consisting of a polynomial number of parallel edges, and the winner is the player who removes the last edge. As consequence, our reduction bears little commonalities with theirs.

In this paper, we prove that \DB is PSPACE-complete by a reduction from \gpos. The starting point of our construction are strategies for \DB endgames that were also used to prove NP-hardness.
However, the NP-hardness is proven by having one player be in control, and there being only one way for the other player to respond.
This de facto makes the game to be 1-player game.
For PSPACE-hardness we need both players to have choices, making it a true 2-player game.
This gives a lot of freedom to the players, and makes it much more difficult to construct gadgets to control the gameplay, in particular because moves and scoring opportunities for one player---if not played immediately---are also available to the other player. 

In Section~\ref{subsec:DB-gameplay} we discuss the gameplay of \DB in detail, and introduce terminology coined by Berlekamp et al.~\cite{winning_ways}.
In Section~\ref{sec:pspace} we present the general structure of our reduction, and then describe our gadgets in Section~\ref{sec:gadgets}. In  Section~\ref{sec:construction} we first show that the players' strategies, which we intend the players to use, are optimal for them and finally prove PSPACE-hardness. 

\subsection{\DB}\label{subsec:DB-gameplay}

On the surface, \DB is quite a simple game. 
The starting and a typical final position for a $10\times 10$ grid are shown in Figure~\ref{fig:typical-position}.
We refer to the players playing the blue and the red colors as \T and \F, respectively.
The color of a line connecting two dots indicates which player drew it, and the color of a box---which player closed it.

Consider a dual graph $G$ of a board of \DB, where a node in $G$ corresponds to a box or the unbounded face, and a pair of nodes in $G$ is connected with an edge if the corresponding move is still available, i.e., the line between the boxes has not been drawn. 
Let the \emph{degree} of a box be the degree of the corresponding node in $G$.

\begin{figure}[t]
     \begin{subfigure}[t]{0.31\textwidth}
         \centering
        \includegraphics[page=32]{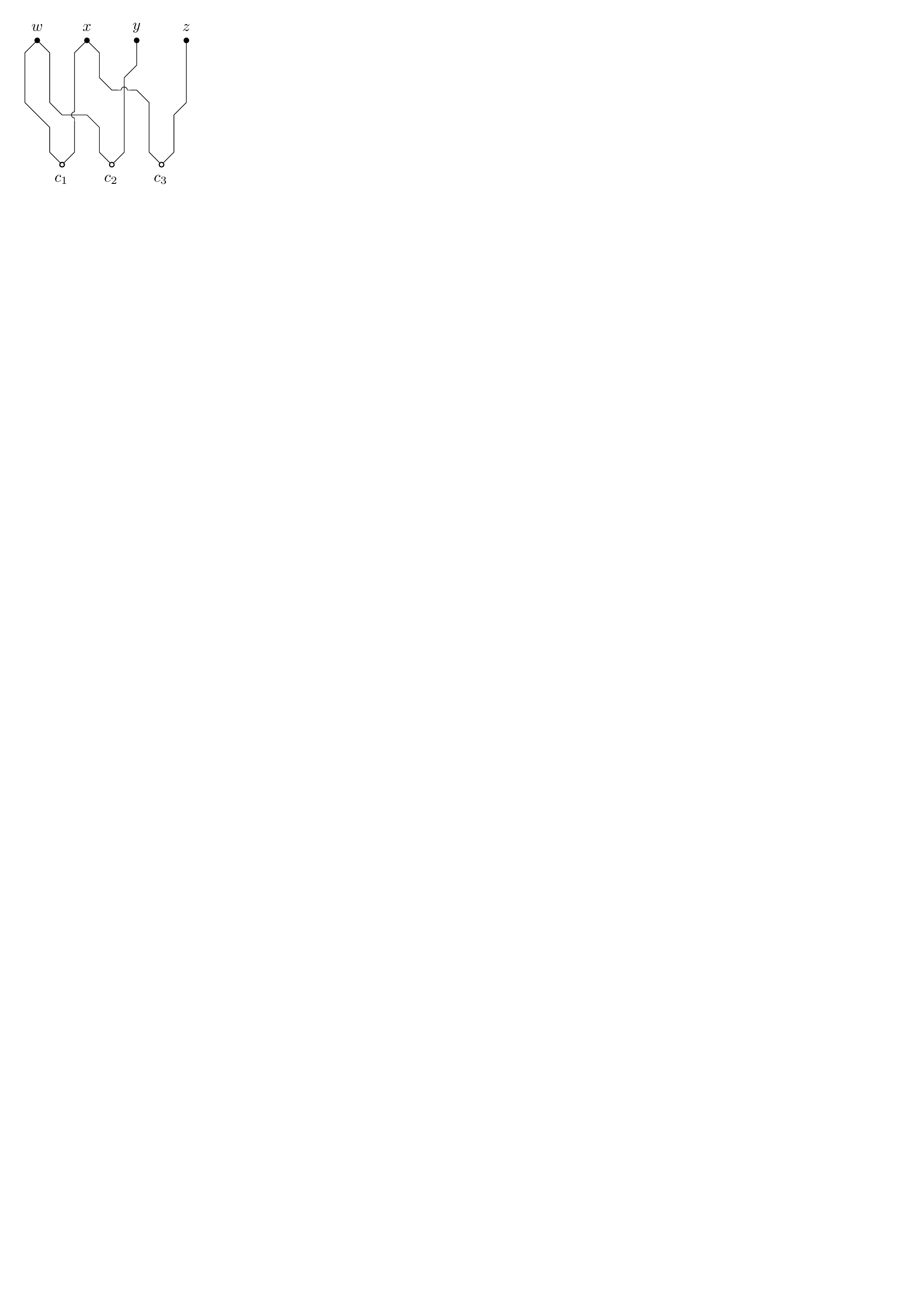}
        \subcaption{Starting position of a \DB game with a $10 \times 10$ grid.}
    \end{subfigure}
    \hfill
    \begin{subfigure}[t]{0.31\textwidth}
        \centering
        \includegraphics[page=31]{fig/DB-PSPACE-ipe.pdf}
        \caption{An example of a loony endgame state of the game.}
        \label{fig:loony-endgame}
    \end{subfigure}
    \hfill
     \begin{subfigure}[t]{0.31\textwidth}
         \centering
        \includegraphics[page=33]{fig/DB-PSPACE-ipe.pdf}
        \subcaption{A possible end position. \T (blue) won with 62 boxes versus \F's (red) 19 boxes.}
        \label{fig:typical-end}
    \end{subfigure}
    \caption{Typical starting, intermediate, and final position of a \DB game.}
    \label{fig:typical-position}
\end{figure}

In \DB, a typical game usually results in a board state that consists exclusively of moves that open the possibility for the opponent to claim a number of boxes in their next turn (see Figure~\ref{fig:loony-endgame}).
That is, in this state there are no degree-1 boxes, but any move made by a player creates a degree-1 box that can be immediately claimed by the opponent.
Consider such a board configuration $S$ and any available move $\ell$ in it.
At least one box $b$ incident to $\ell$ has degree two in $S$ (before the move $\ell$ is made).
Consider a maximal component $\sigma$ of degree-2 boxes in $S$ containing $b$.
There are two cases, either $\sigma$ is a \emph{chain} ending in boxes of degree higher than two (or the outer face), or $\sigma$ is a \emph{cycle}.
Then we say that a player making the move $\ell$ \emph{opens} the chain (cycle) $\sigma$ for the opponent.

\begin{figure}[b!]
    \centering
    \includegraphics[page=39]{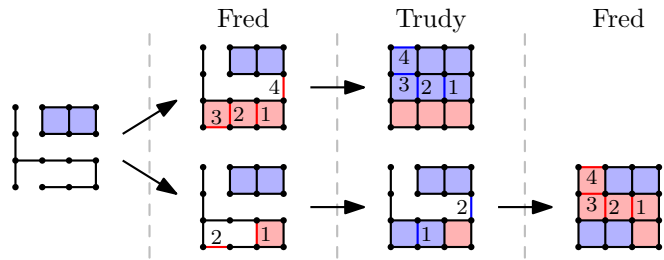}
    \caption{Two possible plays that \F (red) can do. \F can choose to claim all the available boxes (top) and lose the game, or to perform a double-dealing move sacrificing two boxes (bottom, second state, edge 2), and win the game. The order of the edges that are played by \T or \F in one turn is indicated by edge labels. This example is borrowed from \textit{Winning Ways, chapter 16}~\cite{winning_ways}.}
    \label{fig:double-dealing-better}
\end{figure}

To devise a good strategy for \DB, it is important to note that a player is not obliged to claim a box whenever they have the ability to do so.
While seemingly counter-intuitive, it is sometimes beneficial for a player to sacrifice a small number of boxes for long-term gain.
Consider the position in Figure \ref{fig:double-dealing-better}, and let it be \F's (red) turn.
Here, it may seem intuitive for \F to claim the bottom three boxes (Figure~\ref{fig:double-dealing-better} (top)).
However, after doing so \F has to make an extra move, allowing \T (blue) to claim the remaining four boxes and win the game.
On the other hand, by sacrificing two boxes (Figure~\ref{fig:double-dealing-better} (bottom)), \F can force \T to make another move and open the middle chain for him to claim.
That way, \F loses two boxes in the bottom chain, but gains all four boxes in the middle chain, securing the win.

In \textit{Winning Ways}, Berlekamp et al.~\cite{winning_ways} refer to the moves sacrificing a small number of boxes but passing the turn onto the opponent as \emph{double-dealing} moves.
Double-dealing moves can be made in chains of boxes, sacrificing two boxes, and in cycles, sacrificing four boxes (see Figure~\ref{fig:double-dealing}).
Each double-dealing move is usually immediately followed by the opponent making at least one \emph{double-cross} move, i.e., a move that closes two boxes at once.
These double-dealing and double-cross moves are essential for players that want to consistently win games of \DB, and will be used in the reduction later.

\begin{figure}[t]
    \centering
    \includegraphics[page=30]{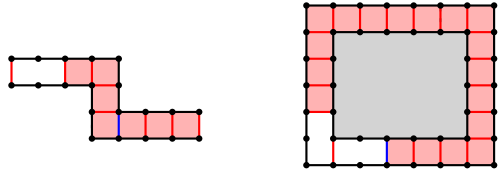}
    \caption{Double-dealing move by \F (red). If \T (blue) opens a chain (or a cycle), \F can claim a sequence of boxes. To pass the turn back to \T, \F can leave two (or four) boxes unclaimed.}
    \label{fig:double-dealing}
\end{figure}

Note that double-dealing moves are only possible in \emph{long} chains of at least three boxes, and in cycles.
(Chains of length one do not have enough boxes for a double-dealing move, and a chain of length two can be opened by selecting the middle edge, thus preventing the opponent from playing a double-dealing move.)
Thus, opening a long chain or a cycle, if there are other moves available, is often a bad idea.
Berlekamp et al.~\cite{winning_ways} refer to such moves as \emph{loony} moves.

Making loony moves is not always a choice.
If, at some point in the game, all unclaimed boxes are part of long chains and cycles, the only possible moves are loony moves (Figure~\ref{fig:loony-endgame}).
Such positions are referred to as a \emph{loony endgames}.
Note that in chains of length $\geq 4$ and cycles of length $\geq 8$, the player making the double-dealing moves scores at least as many boxes as their opponent.
Thus, in loony endgames with chains of length $\geq 4$ and cycles of length $\geq 8$, under optimal play, the game consists of one player making loony moves (opening chains and cycles), and the other player claiming all but two or four boxes, and making double-dealing moves to pass the turn back to the opponent~\cite{winning_ways}.
Here, the player making the double-dealing moves is always better off, since each chain or cycle yields at least as many boxes to this player as it yields to their opponent.
This player is thus referred to as being \emph{in control} of the game.
The benefit of being in control can be seen in  Figure~\ref{fig:typical-end}, which is the end result of \T being in control of the loony endgame shown in Figure~\ref{fig:loony-endgame}.

In \textit{Winning Ways}, Berlekamp et al.~\cite{winning_ways} state that finding a winning strategy in the loony endgame for the player who is not in control is NP-hard.
They argue that maximizing the number of disjoint cycles will maximize the score of the player not in control, since double-dealing moves in cycles yield twice as many boxes as double-dealing moves in chains.
Since this property is important for our reduction, we restate it here and, for completeness, present the argument in the appendix.


\begin{restatable}{lemma}{lemmaxcycles}\label{lem:max_cycles}
Let the configuration of a loony endgame contain $k$ boxes with degree higher than $2$, let $T$ be the sum of the degrees of these boxes, and let $c$ be the maximum number of disjoint cycles in the configuration.
Then, the player who is \emph{not} in control can claim at most $4c+T-2k-4$ boxes. 
\end{restatable}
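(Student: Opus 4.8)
The plan is to reduce the claim to a purely combinatorial count of the boxes that the player not in control is handed through double-crosses, and then to bound that count by $4c + T - 2k - 4$. First I would fix the structural picture: in a loony endgame the degree-$2$ boxes decompose into \emph{chains} and \emph{cycles} that are joined at the $k$ boxes of degree greater than $2$ (the \emph{junctions}), with the outer face playing the role of an additional junction. I would then exhibit a single explicit strategy for the player in control — capture every opened chain or cycle, but decline the last two (resp. four) boxes of every structure except the one played last, by making a double-dealing move — and argue that against this one strategy the player not in control collects at most two boxes per chain and at most four boxes per cycle, and nothing at all from the structure played last. Since the value to the player not in control is a minimum over the controller's strategies, bounding their gain against this particular strategy already yields a valid upper bound.

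Second, I would analyse the optimisation the player not in control faces. Because a double-crossed cycle hands them four boxes whereas a double-crossed chain hands them only two, their best play is to force as many structures as possible to become \emph{cycles}. The mechanism is that capturing a chain incident to a junction lowers that junction's degree, and once a junction drops to degree $2$ it is absorbed into a chain or closes a cycle; by opening chains in the right order the player not in control can realise at most $c$ disjoint cycles. Each realised cycle contributes at most four boxes, and exactly one structure is played last and yields nothing, which accounts for the $4c - 4$ portion of the bound.

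Third comes the chain bookkeeping, which is where the $T - 2k$ term appears. Every junction of degree $d$ must shed $d - 2$ of its incident chains before it can be swallowed into the cyclic part, and each shed chain is opened \emph{as a chain}, handing the player not in control two boxes. Summing $d - 2$ over all junctions gives a total excess degree of $\sum (\deg - 2) = T - 2k$, and I would argue by a handshake-style count — charging each shed chain to the excess degree it consumes at its endpoints — that the boxes collected from chains total at most $T - 2k$. Combining this with the cyclic contribution yields $4c + (T - 2k) - 4$.

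The hard part will be making the last two counts interact correctly rather than treating cycles and chains independently. Realising a cycle and shedding a chain both consume junction degree, and opening a single chain can simultaneously reduce two junctions, merge two chains into one, or close a cycle, so the charging argument must avoid counting the same unit of excess degree once toward a cycle and once toward a chain sacrifice. I would handle this by processing the structures in the order in which the player not in control opens them and maintaining, as an invariant, the number of remaining cycles together with the residual excess degree, showing that each double-cross the player not in control earns either spends four units against the cycle count or two units against the residual excess degree, but never both, so that the grand total cannot exceed $4c + T - 2k - 4$.
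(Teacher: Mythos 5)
Your proposal is correct and follows essentially the same route as the paper's proof: four boxes per opened cycle (at most $c$ of them, with the final structure yielding nothing, giving the $4c-4$ term) plus two boxes per opened chain, where the chain contribution is pinned down by charging each chain opening against two units of the excess degree $\sum(\deg(v)-2)=T-2k$ at the junctions --- exactly the paper's identity $T=2\ell+2k$. The double-counting worry in your last paragraph resolves itself just as you suspect, since disjoint cycles consist only of degree-$2$ boxes and therefore consume no junction degree.
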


\section{Structure of the construction}\label{sec:pspace}

To show that \DB is PSPACE-hard we reduce from the game \gpos, introduced and proven PSPACE-complete by Schaefer~\cite{schaefer}.
The game is played by two players, \T and \F, on a positive CNF formula $\mathcal{F}$.
The players take turns picking a variable that has not yet been chosen. Variables picked by \T are set to \true, variables picked by \F are set to \false. When all variables have been chosen, the game ends. \T wins if formula $\mathcal{F}$ evaluates to \true, and \F wins if formula $\mathcal{F}$ evaluates to \false. 

Given a positive CNF formula $\mathcal{F}$ with $n$ variables and $m$ clauses, we construct an instance of \DB in which \T has a winning strategy if and only if she also has a winning strategy in the corresponding instance of \gpos.
For simplicity we assume that $n$ is even, so that \T and \F get to assign values to the same number of variables.
If the number of variables in $\mathcal{F}$ is odd, we can introduce dummy variables without changing the outcome of a game such that the total number of the variables becomes even.
For each variable and clause of $\mathcal{F}$ we construct a \variable and a \clause gadget, respectively.
We place the \variable gadgets in a row at the top of the board of \DB, and the \clause gadgets in a row at the bottom.
We connect the \variable gadgets to their corresponding \clause gadgets using the \wire gadgets, which transfer the values of the variables to the clauses.
If a clause consists of more than one variable, the wires from these variables must pass through an \org gadget.
Since the signals propagating from the variables may need to cross each other, we construct a \crossover gadget that preserves the values in the two crossing wires.
In our instance of \DB, only the gadgets contain available moves.
The remaining boxes on the board have all the incident edges present.

As we detail in Section~\ref{sec:construction}, after the values of the variables are set, the game enters a loony endgame where \F is in control.
Then \T's winning strategy reduces to selecting a maximum set of disjoint cycles \cmax in the remaining configuration (Lemma~\ref{lem:max_cycles}).
To maximize her score, \T opens all the chains outside of \cmax first, gaining two boxes per chain, and opens the chosen cycles last, gaining four boxes per cycle.
The optimal play for \F is to ensure that he will be in control when the loony endgame starts.
After entering the loony endgame, simply making double-dealing moves until his very last turn is optimal for \F.

\subparagraph{Signal representation}
Most of our gadgets consist of partially overlapping cycles of boxes.
The choice of a set of disjoint cycles determines the value of a signal.
For example, in Figure~\ref{fig:signals} the choice of the left vs. right cycle can encode the value \true vs. \false.
Of course, \T could join the cycles together to select the outermost cycle, but this, as we show later, will not be more beneficial. 

\begin{figure}[t]
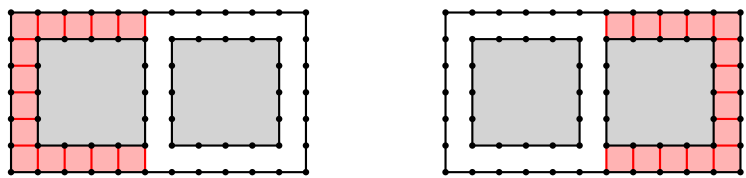

     \centering
     \begin{subfigure}[b]{0.4\textwidth}
         \centering
         \includegraphics[page=2]{fig/DB-PSPACE-ipe.pdf}
         \label{fig:signal_2}
     \end{subfigure}
     \begin{subfigure}[b]{0.4\textwidth}
         \centering
         \includegraphics[page=3]{fig/DB-PSPACE-ipe.pdf}
         \label{fig:signal_1}
     \end{subfigure}
        \caption{A choice of a cycle can encode the value of a signal.}
        \label{fig:signals}
\end{figure}

\subparagraph{Variable assignment}
As both players must have a choice in picking which variable to set, the instance of \DB cannot yet be in a loony endgame.
Thus, the \variable gadgets, which we describe in detail in Section~\ref{sec:variable}, contain non-loony moves instrumental in setting the value of a variable.
%
%
We ensure that the optimal behavior of both players results in the variables being set in alternating fashion, where \T sets them to \true, and \F sets them to \false.
Once all variables are set, the loony endgame is entered.
At this point \F is in control of the game, and it is up to \T to maximize her score by maximizing the number of disjoint cycles in \cmax.
The optimal play by \T results in a correct propagation of the signals from the variables to the clauses.


\subparagraph{Remaining constraints and scoring}
To ensure that optimal play by both players in the instance of \DB corresponds to a valid \gpos game, our gadgets need to give a specific number of boxes to \T depending on the signal values.
We will show that after the variable values have been set, under optimal play, \T can maximize her score only if the signals are propagated correctly.
Every gadget, except for the \clause, yields the same number of disjoint cycles independent of the values of the signals passing through the gadget.
Only the \clause gadget gives more cycles to \T if a \true signal reaches it.
Exactly half of the variables are set to \true, and half to \false.
Thus we can tune the starting score count between \T and \F such that the game is won by \T if and only if all the clauses are satisfied.

\section{Gadgets} \label{sec:gadgets}

In this section we provide the details of the gadgets used in our reduction.
When describing the gadgets below, for a simpler exposition, we assume that the moves that \T and \F make follow the following sequence.
First, in the first $n$ moves \T and \F set all the variables to \true and \false respectively.
Afterwards, when the loony endgame is entered, the order in which \T selects which cycles to add to the disjoint set of cycles \cmax is from the top to bottom, that is, from the variables, through the outgoing wires, through the crossover and or gadgets, and finally down to the \clause gadgets.
Later, in Lemma~\ref{lem:strategies}, we will show that, indeed, under optimal gameplay \T and \F start by setting all the variables.
Furthermore, we will argue that the outcome of the game depends only on the choice of the cycles in \cmax, and not on the order in which \T selects them.

\subsection{Basic wiring}\label{sec:wires}

\begin{figure}[b]
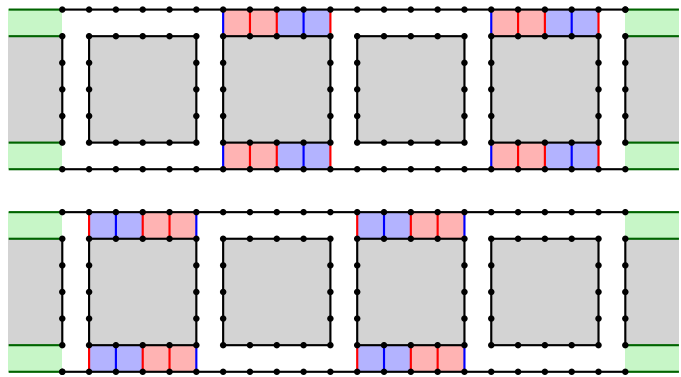

     \centering
         \includegraphics[page=8]{fig/DB-PSPACE-ipe.pdf}
\\[0.4cm]
         \includegraphics[page=7]{fig/DB-PSPACE-ipe.pdf}
        \caption{A \wire gadget consisting of four overlapping cycles and two ways of selecting disjoint cycles. Shown in green are the connections to the adjacent gadgets. Selecting odd cycles in \cmax corresponds to \true (top), and selecting even cycles corresponds to \false (bottom).}
        \label{fig:wire-setting}
\end{figure}

Signals from the \variable gadgets are propagated to the \clause gadgets through wires.
A wire consists of a chain of an even number of partially overlapping cycles (see 
Figure~\ref{fig:wire-setting}).
The first cycle in the wire overlaps with the gadget from which the signal is propagated, and the last cycle overlaps with the gadget towards which the signal is propagated.
Consider some wire $w$, let $C_i$ be its first cycle overlapping with gadget $G_i$, and let $C_o$ be its last cycle overlapping with gadget $G_o$.
If $C_i$ is disjoint from the cycles of $G_i$ that \T adds to \cmax, then we say that the input signal to the wire is \true; otherwise, if $C_i$ overlaps with one of the cycles of $G_i$ in \cmax, the input value is \false.
If \T does not add $C_o$ to \cmax, then the output signal is \true, and the output signal is \false otherwise.

To ensure that \F always follows the strategy of double-dealing moves, we require that each maximal chain of degree-2 boxes in a \wire gadgets contain at least four boxes.
That way, \F receives at least as many boxes in each chain (and cycle) as \T, and thus for \F being in control is always beneficial~\cite{winning_ways}.

Note that, besides the lower bound on the length of a chain, the size and the embedding of the overlapping cycles in a wire can be chosen freely. 
Thus wires are very flexible in connecting components together, which facilitates the construction. 

\begin{lemma}\label{lem:wire:good}
Let a wire $w$ consist of $2k$ partially overlapping cycles.
Then, under optimal play, if the signal in $w$ changes from \false to \true, then \T can select at most $k-1$ disjoint cycles from $w$ to add to \cmax.
Otherwise, under optimal play, \T can select $k$ disjoint cycles from $w$ to add to \cmax.
\end{lemma}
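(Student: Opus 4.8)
The plan is to recast the question as a counting of independent sets in a path. Since the wire $w$ is a chain of $2k$ cycles $C_1,\dots,C_{2k}$ in which only consecutive cycles overlap, two of these cycles can be placed simultaneously into the disjoint family \cmax precisely when they are non-adjacent in the chain. Hence the sets of cycles of $w$ that \T may add to \cmax are exactly the independent sets of the path $P$ on the vertices $C_1,\dots,C_{2k}$. Before counting I would dispose of \T's only other option, namely merging consecutive cycles into one larger cycle by leaving a shared edge unplayed: such a merge replaces several cycles by a single one and can therefore only lower the number of disjoint cycles, which by Lemma~\ref{lem:max_cycles} is never advantageous for \T. The order in which \T opens the cycles of $w$ does not affect the count and is handled separately in Lemma~\ref{lem:strategies}, so it may be ignored here.

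Next I would encode the two signal values as constraints on the endpoint cycles $C_i=C_1$ and $C_o=C_{2k}$ of the path. By definition the input is \false exactly when $C_1$ overlaps a cycle of the adjacent gadget already lying in \cmax; as \cmax is disjoint this forces $C_1\notin S$ for any selection $S$ taken inside $w$, whereas a \true input leaves $C_1$ free. Dually, the output is \true exactly when $C_{2k}\notin$ \cmax, i.e.\ $C_{2k}\notin S$, and \false exactly when $C_{2k}\in S$. Determining how many cycles \T can harvest from $w$ for a given input and a given output thus becomes the problem of finding a maximum independent set of $P$ subject to the two induced endpoint conditions.

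It then remains to count, using that a path on $\ell$ vertices has a maximum independent set of size $\lceil \ell/2\rceil$. If the signal changes from \false to \true, then the input forces $C_1\notin S$ and the output forces $C_{2k}\notin S$, so $S$ lives inside the induced sub-path on $C_2,\dots,C_{2k-1}$, which has $2k-2$ vertices; hence $|S|\le k-1$, as claimed (and $\{C_2,C_4,\dots,C_{2k-2}\}$ shows this is attained). In every remaining case I would exhibit a family of size $k$ meeting the constraints: the odd cycles $\{C_1,C_3,\dots,C_{2k-1}\}$ give a \true input with a \true output, while the even cycles $\{C_2,C_4,\dots,C_{2k}\}$ give a \false output together with either input value, and both families contain exactly $k$ pairwise disjoint cycles. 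This yields the upper bound $k-1$ in the penalised case and the attainability of $k$ in all others.

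I expect the only real obstacle to be conceptual rather than computational: one must argue convincingly that this independent-set model captures optimal play, so that the count is not undercut by some cleverer option for \T. The two moves available to \T that fall outside the model---merging adjacent cycles and interleaving the opening of $w$ with play in neighbouring gadgets---are ruled out by Lemma~\ref{lem:max_cycles} and Lemma~\ref{lem:strategies}, respectively. The remaining subtlety is to verify that a \false input genuinely blocks $C_1$; this is immediate once one observes that the input value is defined through $C_1$'s overlap with \cmax and that \cmax is by construction a family of pairwise disjoint cycles.
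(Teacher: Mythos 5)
Your proposal is correct and follows essentially the same route as the paper: the same translation of the \false/\true input and output values into exclusion constraints on the two end cycles, the same odd/even selections witnessing $k$ cycles in the unpenalised cases, and the same reduction to a chain of $2k-2$ cycles giving the $k-1$ bound when the signal flips from \false to \true. Your independent-set-in-a-path phrasing and the explicit dismissal of cycle-merging are just slightly more formal packaging of the argument the paper gives.
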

\begin{proof}
As we show in Lemma~\ref{lem:strategies}, after the first $n$ moves, which \T and \F make in the \variable gadgets, the game enters a loony endgame with \F in control.
If the output signal in the wire matches the input signal, then only one of $C_i$ or $C_o$ of $w$ are in \cmax.
Then \T can select all odd (if $C_i\in\cmax$) or all even (if $C_o\in\cmax$) cycles to add to \cmax, which results in $k$ disjoint cycles.
If the the input signal is \true, and the output signal is \false, then both $C_i$ and $C_o$ are in \cmax.
Then \T can, for example, select $k-1$ odd cycles and $C_o$ to add to \cmax, which again results in $k$ cycles in total.

If, however, the input signal is \false, and the output signal is \true, then neither $C_i$ nor $C_o$ can be in \cmax.
This leaves a chain of $2k-2$ cycles, of which at most $k-1$ disjoint cycles can be selected to be added to \cmax. 
\end{proof}

In our construction we ensure that \T can win only if she gets $k$ disjoint cycles from a wire, and thus under optimal play she cannot flip a signal propagating from a variable from \false to \true.
Flipping a signal from \true to \false is not beneficial for \T, as her goal is to satisfy all the clauses.
Nevertheless, flipping a signal from \true to \false leads to the same number of boxes for her (at least locally within a wire), and is thus allowed.

\subsection{Crossover gadget}\label{sec:crossover}

\begin{figure}[t]
\begin{minipage}[t]{0.48\textwidth}
    \centering
    \includegraphics[page=11]{fig/DB-PSPACE-ipe.pdf}
    \caption{The \crossover gadget. Connections to the adjacent wires are shown in green.}
    \label{fig:crossover}
\end{minipage}
\hfill
\begin{minipage}[t]{0.48\textwidth}
    \centering
    \includegraphics[page=12]{fig/DB-PSPACE-ipe.pdf}
    \caption{A possible choice of a set of disjoint cycles. The selection has fourth degree rotational symmetry.}
    \label{fig:crossover-disjoint}
\end{minipage}
\end{figure}

Since the graph representing \gpos is not necessarily planar, wires may need to cross each other in our construction.
We describe a \crossover gadget that allows two signals to cross while preserving the signal values.
The gadget has two inputs and two outputs on the opposite sides of the gadget.
Let $C_{1,i}$ and $C_{2,i}$ be the input cycles of the gadget, and $C_{1,o}$ and $C_{2,o}$ be the output cycles (see Figure~\ref{fig:crossover}).
An input cycle $C_{*,i}$ is in \cmax if the corresponding input signal is \true, and otherwise it is \false.
An output cycle $C_{*,o}$ is not in \cmax if the output signal is \true, and otherwise it is \false.

There are four pairwise overlapping cycles $C_a$, $C_b$, $C_c$, and $C_d$ in the middle of the gadget, forming a cross shape.
Only one of these cycles can be added to \cmax.
A choice of which of these cycles is added to \cmax is in one-to-one correspondence to the input signal values (see Figure~\ref{fig:crossover-disjoint}).

\begin{lemma}\label{lem:crossover:good}
Under optimal play, if a signal in a \crossover gadget changes from \false to \true, then \T can select at most $4$ disjoint cycles from the gadget to add to \cmax.
Otherwise, under optimal play, \T can select $5$ disjoint cycles from the gadget.
\end{lemma}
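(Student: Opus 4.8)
The plan is to mirror the proof of Lemma~\ref{lem:wire:good}, replacing the single chain of cycles by a case analysis over the four possible combinations of the two input signals. As in the wire, I would first invoke the fact (established in Lemma~\ref{lem:strategies}) that after the first $n$ moves the instance is a loony endgame in which \F is in control, so that \T's only means of scoring is to fix a maximum collection of disjoint cycles \cmax. The structural fact I would isolate at the outset is that the four central cycles $C_a,C_b,C_c,C_d$ are pairwise overlapping, so \cmax contains at most one of them; moreover, by the stated one-to-one correspondence (Figure~\ref{fig:crossover-disjoint}), the central cycle that \T includes is exactly the one compatible with the two input values $C_{1,i},C_{2,i}$. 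Everything then reduces to counting how many peripheral input/output cycles remain selectable once that central cycle is fixed.

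For the lower bound I would show that correct propagation always yields $5$ disjoint cycles. Exploiting the fourth-degree rotational symmetry of the gadget (Figure~\ref{fig:crossover-disjoint}), it suffices to treat a single representative of each symmetry class of input combinations. In each case I would exhibit an explicit selection consisting of the unique compatible central cycle together with four pairwise-disjoint peripheral cycles---one per input and one per output, chosen to realize the desired output values---and verify disjointness directly from the figure. This gives $1+4=5$ cycles and simultaneously certifies that the outputs can be made to match the inputs.

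The hard part will be the upper bound in the bad case, i.e.\ showing that flipping some signal, say signal~$1$, from \false to \true forces \T down to at most $4$ disjoint cycles. Here the two conventions combine unfavorably: a \false input means $C_{1,i}\notin\cmax$, while a \true output means $C_{1,o}\notin\cmax$, so neither of the two peripheral cycles on line~$1$ contributes. I would argue that the only central cycle whose overlap pattern frees up both the input-$1$ and output-$1$ regions is precisely the one that is incompatible with keeping signal~$2$ intact, so that recovering a fifth cycle would require either a second central cycle (impossible, as the central cycles pairwise overlap) or a peripheral cycle on line~$1$ (impossible, as both are excluded). Ruling out every alternative selection---in particular the tempting option of discarding the central cycle altogether in favor of extra peripheral cycles---is the delicate step, and I would handle it by checking, case by case on the central cross, that no such trade recovers the lost cycle. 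The two matching-signal cases and the benign \true-to-\false flip are then disposed of exactly as in Lemma~\ref{lem:wire:good}, each still yielding the full $5$ cycles.
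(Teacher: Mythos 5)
Your plan follows essentially the same route as the paper's proof: establish the loony endgame via Lemma~\ref{lem:strategies}, exhibit an explicit five-cycle selection (one central cycle plus two per signal) when no signal flips from \false to \true, and in the bad case count what is lost when both $C_{1,i}$ and $C_{1,o}$ are excluded. The ``delicate step'' you flag is precisely the claim the paper makes---that among $C'$, $C''$ and the four central cycles at most two can be pairwise disjoint---so your promised case check on the central cross is exactly the paper's argument.
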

\begin{proof}
If the output signals in the \crossover gadget match the input signals, then only one of each pair $\{C_{1,i}, C_{1,o}\}$ and $\{C_{2,i}, C_{2,o}\}$ are in \cmax.
Since the four center cycles $C_a$, $C_b$, $C_c$, and $C_d$ all share a single square, only one of these four cycles can be chosen.
Then \T can select a corresponding cycle from the middle of the gadget, and two more cycles from each signal.
For example, a selection of five disjoint cycles for the case when the first input signal is \false and the second is \true is shown in Figure~\ref{fig:crossover-disjoint}.
If an input signal is \true, and the corresponding output signal is \false, then both $C_{*,i}$ and $C_{*,o}$ are in \cmax.
Then \T can, for example, make exactly the same choice as in the case where the output signal would have been \true.

Assume now, w.l.o.g., that the signal corresponding to $C_{1,i}$ and $C_{1,o}$ changes from \false to \true in the gadget.
That is, neither $C_{1,i}$ nor $C_{1,o}$ are in \cmax.
Let $C'$ and $C''$ be the cycles in the gadget adjacent to $C_{1,i}$ and $C_{1,o}$ respectively.
Thus, among cycles $C'$, $C''$, $C_a$, $C_b$, $C_c$, and $C_d$ at most two cycles can be in \cmax, and therefore at most four cycles can be chosen to be in \cmax.
\end{proof}

\subsection{Or gadget}\label{sec:or}

\begin{figure}[b]
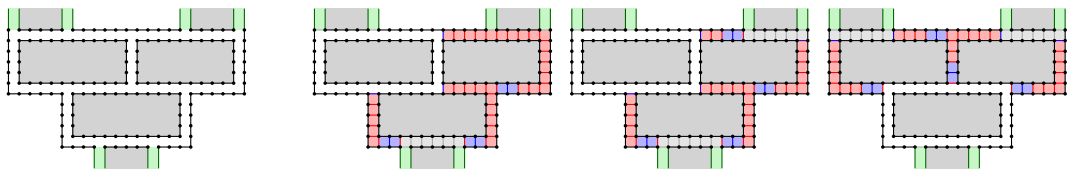

    \centering
    \includegraphics[page=24]{fig/DB-PSPACE-ipe.pdf}
    \hfill
    \includegraphics[page=25]{fig/DB-PSPACE-ipe.pdf}
    \caption{The \org gadget (left) and the three possible combinations of the input values. From left to right: two \true inputs, one \true and one \false input, and two \false inputs. The boxes highlighted in grey belong to a cycle in an adjacent \wire gadget.}
    \label{fig:or-gadget}
\end{figure}

The \org gadget consists of three pairwise overlapping cycles (see Figure~\ref{fig:or-gadget} (left)).
Two of the cycles partially overlap with an end cycle of an input wire, and one cycle partially overlaps with the output cycle.
Let $C_{1,w}$ and $C_{2,w}$ be the last cycles of the two input \wire gadgets, and let $C_{1,i}$ and $C_{2,i}$ be the cycles of an \org gadget adjacent to these two wires respectively.
Let $C_o$ be the third cycle of the \org gadget, which is adjacent to an output wire.
Cycles $C_{1,w}$ and $C_{2,w}$ are not in \cmax if the input from their corresponding wire is \true, and are in \cmax if their input is \false.
If $C_o$ is not in \cmax then the output of the \org gadget is \true, and if it is in \cmax then the output value is \false.
Only one of the three cycles in the \org gadget can be selected to be added to \cmax, and thus the output of the gadget can be \true only if one of $C_{1,w}$ or $C_{2,w}$ is in \cmax.

\begin{lemma}\label{lem:org:good}
Under optimal play, if both input signals in an \org gadget are \false but the output signal is \true, then \T cannot add a single cycle from the gadget to \cmax.
Otherwise, under optimal play, \T can select $1$ cycle from the gadget to add to \cmax.
\end{lemma}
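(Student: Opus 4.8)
The plan is to reduce the whole statement to counting the maximum number of pairwise-disjoint cycles \T can extract from the three gadget cycles $C_{1,i}$, $C_{2,i}$, and $C_o$. As shown in Lemma~\ref{lem:strategies}, under optimal play the game enters a loony endgame with \F in control, so \T's payoff is governed solely by the size of \cmax through Lemma~\ref{lem:max_cycles}; hence it suffices to maximize the number of gadget cycles added to \cmax. I would first record the two structural facts already given in the surrounding text: (i) the three cycles are pairwise overlapping, so at most one of them can lie in \cmax; and (ii) $C_{j,i}$ overlaps the adjacent last wire cycle $C_{j,w}$, so $C_{j,i}$ can be added to \cmax only when $C_{j,w}\notin\cmax$, that is, only when input $j$ is \true. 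By the signal encoding, membership $C_o\in\cmax$ encodes output \false, and $C_{j,w}\in\cmax$ encodes input $j$ being \false.

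With this dictionary fixed, the lemma becomes a short case analysis on the output value. If the output is \false, then $C_o\in\cmax$, which already contributes one cycle; fact~(i) then forbids adding either $C_{1,i}$ or $C_{2,i}$, so \T obtains exactly one cycle, matching the ``otherwise'' bound. If the output is \true, then $C_o\notin\cmax$, and the only candidates are $C_{1,i}$ and $C_{2,i}$. By fact~(ii), $C_{j,i}$ is eligible precisely when input $j$ is \true. Thus, if at least one input is \true, \T adds the corresponding eligible cycle for a total of one; and if both inputs are \false, neither candidate is eligible while $C_o$ is excluded, so \T obtains zero cycles, which is exactly the distinguished configuration in which both inputs are \false and the output is \true.

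It remains to note that these selections are also optimal: the upper bound of one cycle is immediate from fact~(i), and the bound of zero in the distinguished case follows from combining facts~(i) and~(ii) with $C_o\notin\cmax$. I do not expect a genuine obstacle here, as the argument is entirely bookkeeping about which overlaps forbid which cycles. The only point requiring care is to state the encoding dictionary cleanly, so that the case analysis on signal values is literally a case analysis on cycle membership. Once that dictionary is fixed, the remark about flipping a signal from \true to \false folds into the case where the output is \false: if an input is \true the output may still be set \false, which keeps $C_o\in\cmax$ and yields one cycle, so such a flip is permitted but never advantageous.
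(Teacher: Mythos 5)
Your proposal is correct and follows essentially the same argument as the paper: at most one of the three pairwise-overlapping gadget cycles can be in \cmax, and an input cycle $C_{j,i}$ is available only when the adjacent wire cycle $C_{j,w}$ is not in \cmax. The only difference is cosmetic---you organize the case analysis by output value while the paper organizes it by input value---and your version is if anything slightly more explicit about the (both inputs \false, output \false) subcase.
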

\begin{proof}
First consider the case when one of the input signals in the \org gadget is \true.
W.l.o.g., let the signal from the first wire be \true, that is $C_{1,w}$ is not in \cmax.
Then \T can select $C_{1,i}$ to add to \cmax and thus the output from the \org gadget would correspond to \true.
\T may as well choose $C_o$ to add to \cmax and make the output of the gadget to be \false.
In either case, one cycle from the gadget is in \cmax.

If both input signals are \false, then both cycles $C_{1,w}$ and $C_{2,w}$ are in \cmax.
Thus none of $C_{1,i}$ and $C_{2,i}$ can be in \cmax.
If at the same time the output of the \org gadget is \true, then $C_o$ is not in \cmax, and thus \T cannot select a single cycle to add to \cmax from this \org gadget.
\end{proof}

\subsection{Variable gadget}\label{sec:variable}

\begin{figure}[b!]
\begin{minipage}[t]{0.46\textwidth}
    \centering
    \includegraphics[page=19]{fig/DB-PSPACE-ipe.pdf}
    \caption{The value-setting component of the \variable gadget. There are two non-loony moves (yellow) available, of which only one can be played as a non-loony move.}
    \label{fig:variable}
\end{minipage}
\hfill
\begin{minipage}[t]{0.5\textwidth}
    \centering
    \includegraphics[page=22]{fig/DB-PSPACE-ipe.pdf}
    \caption{The complete \variable gadget consisting of the value-setting component and the fan-out component. Outgoing wires are shown in green.}
    \label{fig:var-example}
\end{minipage}
\end{figure}

\begin{figure}[t]
    \centering
    \includegraphics[page=20]{fig/DB-PSPACE-ipe.pdf}
    \caption{The variable is set to \false (left) and \true (right).}
    \label{fig:variable-set}
\end{figure}

The \variable gadget is responsible for the assignment of \true and \false values to the variables of the \gpos instance.
It consists of two components: the \emph{value-setting component} (see Figure~\ref{fig:variable}) designed to set the value of the variable, and the \emph{fan-out component} designed to duplicate the variable signal.
The whole construction is presented in Figure~\ref{fig:var-example}.
Let $C_1$, $C_2$, and $C_3$ be the three cycles in the value-setting component.
The \variable gadget is the only gadget that contains non-loony moves; there are two non-loony moves (shown in yellow in the figure) at the intersection of $C_1$ and $C_2$.

As we show later, optimal play by both \T and \F is to set all the variables in the first $n$ moves, such that \F always sets a variable to \false and \T---to \true.
Figure~\ref{fig:variable-set} shows the two possible value assignments of the variable gadgets.
To set a variable to \false, \F plays one of the non-loony moves in the corresponding \variable gadget.
Then \T responds by claiming the one box available (see Figure~\ref{fig:variable-set} (left)).
This results in the cycles $C_1$ and $C_2$ getting merged.
To set a variable to \true, \T opens a side chain of $C_2$ (see Figure~\ref{fig:variable-set} (right)).
Then \F responds by claiming every box in the opened chain, and proceeds to setting the next variable.
Note that after \T's move the non-loony moves in the gadget become loony moves (as they are now a part of a long chain).

At this point we make two observations which will be useful when proving correctness of the construction and the properties of the optimal play in Section~\ref{sec:construction}.
First, observe that the non-loony moves come in pairs, one in each variable, such that, for each pair, either both moves in the pair are still non-loony or neither is anymore.
We refer to them as \emph{non-loony pairs}.
Second, note that in the process of assigning values to the \variable gadgets, \T gets a box for each variable set to \false by \F, and zero boxes for each variable set to \true by herself.

Once the value of a variable is set, it propagates to the outgoing wires through the fan-out component of the \variable gadget.
The fan-out component simply consists of one cycle $C_4$ overlapping with the cycle $C_3$ (see Figure~\ref{fig:var-example}), to which multiple wires can be attached.
After the variable is set, \T can add at most two cycles from it to \cmax.
Then, if the variable is set to \false, cycle $C_4$ has to be one of the two selected cycles, and thus the signal propagated into the wires is \false.
If the variable is set to \true, \T can add $C_1$ and $C_3$ to \cmax, and thus propagate the \true value into the wires.

\begin{lemma}\label{lem:var:good}
Under optimal play, after a \variable gadget is assigned a value, if it is set to \false but the output signal is \true, then \T can add at most $1$ cycle from the gadget to \cmax.
Otherwise, under optimal play, \T can add $2$ cycles from the gadget to \cmax.
\end{lemma}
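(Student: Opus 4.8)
The plan is to analyze the variable gadget by cases according to how the value is set and what output signal is produced, mirroring the structure of the earlier wire, crossover, and or lemmas. The gadget has cycles $C_1$, $C_2$, $C_3$ in the value-setting component and $C_4$ in the fan-out component, and the key structural fact is that consecutive cycles in the chain $C_1, C_2, C_3, C_4$ overlap pairwise, so no two adjacent cycles can simultaneously be in \cmax. Since the cycles form a path of length four with an overlap between each consecutive pair, the maximum number of pairwise-disjoint cycles one can select is two, which already caps \T's gain at two cycles.

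First I would recall from the construction in Section~\ref{sec:variable} how the value-setting move determines which of $C_1$ and $C_2$ survives as an independent cycle: if \F sets the variable to \false by playing a non-loony move and \T claims the resulting box, $C_1$ and $C_2$ get merged into a single cycle, whereas if \T sets it to \true by opening a side chain of $C_2$, that chain is consumed and $C_2$ is effectively eliminated while $C_1$ remains available. This means the set of still-selectable cycles differs between the two assignments, and I would make this precise before counting. In the \true case, $C_1$ and $C_3$ are disjoint (they are not adjacent in the chain), so \T adds both, and since $C_4$ overlaps only $C_3$, leaving $C_4$ out keeps the output \true, giving the claimed two cycles with a \true output. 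In the \false case, the merged $C_1$--$C_2$ cycle and $C_3$ can be taken, but to propagate \false the fan-out cycle $C_4$ must be in \cmax; since $C_4$ overlaps $C_3$, choosing $C_4$ forces dropping $C_3$, yet the merged cycle together with $C_4$ still yields two disjoint cycles, matching the ``otherwise'' bound of two.

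The crux is the remaining case: the variable is set to \false but the output signal is forced to \true. Setting \false merges $C_1$ and $C_2$, so in the chain only the merged cycle, $C_3$, and $C_4$ remain as candidates; an output of \true requires $C_4 \notin \cmax$. I would argue that among the merged cycle, $C_3$, and $C_4$, consecutive-overlap constraints prevent selecting both the merged cycle and $C_3$ is false---rather, the obstruction is that to make the output \true one cannot use $C_4$, and the \false setting has already destroyed the independence of $C_2$ that a \true output would otherwise exploit. Concretely, with $C_4$ excluded and the $C_1$--$C_2$ merge in place, the only way to reach two disjoint cycles would require both the merged cycle and $C_3$, but the fan-out overlap structure (where $C_3$ is pinned by the requirement that its neighbor $C_4$ carry the \false signal) leaves at most one selectable cycle, establishing the bound of one. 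The main obstacle I anticipate is verifying exactly which cycles overlap after the merge---in particular confirming that the merged $C_1$--$C_2$ cycle still overlaps $C_3$ in the same way $C_2$ did, so that the disjointness accounting is correct---and ensuring the optimal-play assumption (that \F remains in control throughout the loony endgame, as established in Lemma~\ref{lem:strategies}) legitimately rules out any alternative selection that could salvage a second cycle. Once the overlap pattern is pinned down, the counting in each case is immediate.
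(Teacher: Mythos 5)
Your overall approach is the same as the paper's: enumerate the cycles that survive each assignment and count how many pairwise-disjoint ones can still be selected under the output constraint. The \true case and the \false-assignment-with-\false-output case are handled essentially as in the paper (after a \true assignment $C_1$ is joined to $C_3$ only by a chain, so $C_1$ plus one of $C_3$, $C_4$ gives two cycles; after a \false assignment with $C_4\in\cmax$, the merged $C_1$--$C_2$ cycle together with $C_4$ gives two).

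The decisive case---variable set to \false, output \true---is not actually established, though. The entire argument there rests on one structural fact: after the merge, the three remaining cycles (merged $C_1$--$C_2$, then $C_3$, then $C_4$) form a chain in which each \emph{consecutive} pair overlaps; in particular the merged cycle overlaps $C_3$, so once $C_4$ is excluded (as a \true output requires) at most one of the two remaining, mutually overlapping cycles can enter \cmax. You appear at one point to reject this consecutive-overlap constraint as the obstruction, and you substitute the claim that ``$C_3$ is pinned by the requirement that its neighbor $C_4$ carry the \false signal''---but in precisely this case the output is \true, so $C_4$ is \emph{not} in \cmax and pins nothing; that argument does not close the case. You then concede at the end that you have not verified whether the merged cycle overlaps $C_3$. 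That verification is exactly the missing step: it does overlap (merging only fuses $C_1$ and $C_2$ into a single cycle and leaves the $C_2$--$C_3$ overlap intact), and with that fact the bound of one cycle is immediate, which is how the paper argues it. A minor further omission: the ``otherwise'' branch also contains the sub-case set-to-\true with output \false, which you do not mention, though it is immediate (select $C_1$ and $C_4$).
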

\begin{proof}
As we show in Lemma~\ref{lem:strategies}, optimal play of both \T and \F results in them setting all the variables according to the rules described above in the first $n$ moves.
Afterwards the game enters a loony endgame with \F in control.

If a \variable gadget is set to \true, then there are three cycles left in the gadget: two overlapping cycles $C_3$ and $C_4$, and the cycle $C_1$ connected to $C_3$ by a chain.
Then \T can select $C_1$ and one of $C_3$ or $C_4$ to add to \cmax.

If the \variable gadget is set to \false, then there are still three cycles left in the gadget, but now these cycles are forming a chain where each consecutive pair of cycles is overlapping.
Now, if the output value is \true then $C_4$ cannot be in \cmax, and from the remaining two cycles, only one can be selected to be added to \cmax.
\end{proof}

\subsection{Clause gadget}

\begin{figure}[t]
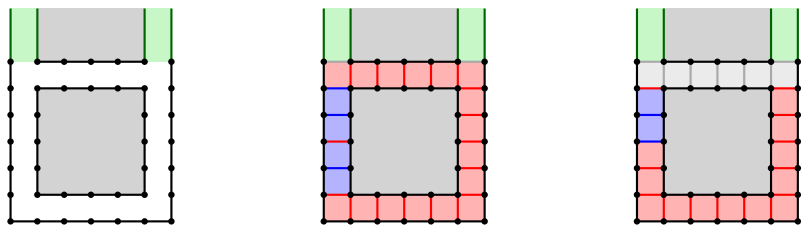

    \centering
    \includegraphics[page=47]{fig/DB-PSPACE-ipe.pdf}
    \hfil
    \includegraphics[page=48]{fig/DB-PSPACE-ipe.pdf}
    \hfil
    \includegraphics[page=49]{fig/DB-PSPACE-ipe.pdf}
    \caption{The \clause gadget (left) yields four boxes to \T if the input signal is \true (middle), and only two boxes when the input is \false (right). Boxes highlighted in grey belong to the last cycle in the adjacent wire.}
    \label{fig:clause}
\end{figure}

Finally, we describe a \clause gadget that yields more boxes to \T if the signal entering the clause corresponds to \true.
A \clause gadget is simply an extra cycle extending the end of a \wire gadget to an odd length.
Figure~\ref{fig:clause} shows the gadget, and the two possible assignments of this gadget.
Whenever the signal is \true, it is possible for \T to create a disjoint cycle in the gadget which gives her four boxes.
If the signal is \false, \T can only make a chain in this gadget which yields only two boxes.

\begin{lemma}\label{lem:clause}
Under optimal play, the clause gadget yields at most $4$ boxes to \T if the input signal is \true, and at most $2$ boxes if the input signal is \false. 
\end{lemma}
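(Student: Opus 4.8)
The plan is to reduce the statement, exactly as in the preceding gadget lemmas, to the disjoint-cycle accounting of the loony endgame. First I would invoke Lemma~\ref{lem:strategies}: after the first $n$ moves set every variable, the game is a loony endgame with \F in control, and since every chain has at least four boxes and every cycle at least eight, \F's optimal reply to \T opening a component is to double-deal. Hence \T, who is not in control, collects exactly two boxes from each maximal chain she opens and exactly four from each maximal cycle (the double-crossed boxes \F leaves her). Bounding the boxes the clause yields to \T therefore comes down to whether the clause can contribute a disjoint cycle, worth four, or merely a chain, worth two, to \cmax.

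Next I would fix the geometry: the clause sits at the terminal end of its wire, its single cycle $C_{cl}$ overlapping the wire's last cycle $C_o$ and nothing else. By the wire conventions the signal entering the clause is \true precisely when $C_o\notin\cmax$ and \false precisely when $C_o\in\cmax$. In the \true case $C_o\notin\cmax$, so $C_{cl}$ is disjoint from every cycle already in \cmax; \T may add $C_{cl}$, making the clause a genuine disjoint cycle, and opening it later nets her four boxes. In the \false case $C_o\in\cmax$, and as $C_{cl}$ overlaps $C_o$ it cannot also lie in \cmax; the boxes of $C_{cl}$ outside $C_o$ then form a chain, from which \T collects only two double-crossed boxes.

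Finally I would close the upper bounds. In the \true case the clause is a single cycle, so four is the most \T can take; the only alternative, merging $C_{cl}$ with $C_o$ into a longer structure, degrades it to a chain worth just two. In the \false case making $C_{cl}$ a disjoint cycle would force dropping $C_o$ from \cmax, which is exactly the \true configuration, so two is the most available. The hard part will be this upper-bound direction: I must rule out that \T manufactures extra boxes by some global rearrangement---stitching the clause into neighbouring chains or cycles, or exploiting that the clause may be the very last structure opened, where \F abandons double-dealing and claims everything. The monotone accounting (a cycle always beats a chain, so merging never helps) handles the former, and the latter can only lower \T's haul, so the stated ``at most'' bounds are safe.
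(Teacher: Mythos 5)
Your proposal is correct and follows essentially the same route as the paper: the input signal determines whether the wire's last cycle is in \cmax, hence whether the clause's cycle can be added to \cmax as a disjoint cycle (worth four boxes via a double-cross) or is forced to degenerate into a chain (worth two). The paper's proof is just a terser version of your second and third paragraphs; your extra care on the ``at most'' direction (merging never helps, and being the last component opened only hurts \T) is consistent with, and slightly more explicit than, what the paper writes.
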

\begin{proof}
If the input signal to the clause gadget is \true, the adjacent cycle to the clause gadget is not in \cmax.
Therefore, a the cycle of the gadget can be added to \cmax.
When in the loony endgame, this cycle yields four boxes to \T after \F makes a double-dealing move.

Otherwise, if the input signal is \false, the adjacent cycle is in \cmax, and from the \clause gadget only a chain is left.
This chain yields only two boxes to \T after \F makes a double-dealing move.
\end{proof}

\section{Players' strategies and PSPACE-completeness}\label{sec:construction}
With the gadgets described above, we construct a \DB instance for any \gpos instance such that \T can win the \DB instance if and only if she can win the corresponding \gpos instance.
We lay out the \variable gadgets, attach a corresponding number of \wire gadgets, pass the wires through \org gadgets, using \crossover gadgets to cross signals, and finally connect wires to the \clause gadgets.
An example of our construction is given in Figure~\ref{fig:example-game} in the appendix.

The initial score we set to the \DB instance depends on the number of gadgets of each type in the construction.
By Lemma~\ref{lem:max_cycles} the total score in the loony endgame depends on the number of disjoint cycles $c$, the number of boxes $k$ with degree higher than $2$, and their total degree $T$.
The configuration of the loony endgame, and thus the values $k$ and $T$, is changed only when the \variable gadgets are being assigned their values.
We will argue below, that under optimal play, exactly half of the variables are set to \true and half are set to \false.
Thus the total values of $k$ and $T$ are the same, no matter which variables are assigned to which values.
If \T can satisfy $\mathcal{F}$, by Lemma~\ref{lem:max_cycles}, she can claim $4c+T-2k-4$ boxes in the loony endgame, and $n/2$ boxes from the variables set to \false.
Let $N$ be the total number of unclaimed boxes in our \DB instance.
Then, \F gets $N-n/2-(4c+T-2k-4)$ boxes.
We set the initial scores of \T and \F such that \T's final score is one larger than \F's if she can satisfy $\mathcal{F}$.
Otherwise, her score will be strictly less than \F's.

Next, we describe the optimal strategies for \T and \F, both before the loony endgame is entered and in the loony endgame.

\subparagraph{Optimal strategies for \T and \F in the loony endgame}
We start by summarizing both strategies in the loony endgame, assuming that all variables have already been assigned a value using the moves we have described in Section~\ref{sec:variable}.
As we argue below, \F can always ensure that he is in control of the loony endgame.
It is always beneficial for \F to stay in control, as all the chains and cycles in the loony endgame configuration yield at least as many boxes to him than to \T.

In the loony endgame, \T can choose which chains and cycles to open.
To maximize her score, \T is going to select a maximum number of disjoint cycles \cmax in the loony endgame (see Lemma~\ref{lem:max_cycles}).
This can be done by first making a loony move in all chains, to which \F responds by claiming all but two boxes, finishing with a double-dealing move in order to stay in control.
Afterwards, \T makes loony moves in the remaining cycles, to which \F responds again by claiming all but four boxes, finishing with a double-dealing moves each time, except for in the final cycle.

\subparagraph{Optimal strategy for \T before the loony endgame}
\T's strategy before the loony endgame is to set enough \variable gadgets to \true in order to satisfy all the clauses.
By Lemmas~\ref{lem:max_cycles} and~\ref{lem:clause}, \T gains more boxes from each satisfied clause.
Therefore, the optimal strategy for \T is to claim the boxes opened by \F when setting variables to \false, and to set variables to \true, by using a loony move in a side chain of cycle $C_2$ of the variables.

As we show in Lemma~\ref{lem:strategies}, if \F deviates from setting variables to \false, and plays a loony move when there are non-loony moves available, \T can adopt \F's strategy and dominate the rest of the game by ensuring that she ends up in control when the loony endgame is entered.

\subparagraph{Optimal strategy for \F before the loony endgame}
\F's strategy is to ensure that he is in control when the loony endgame starts, and it can be described completely as responses to what \T does.
By our assumption the number of variables in $\mathcal{F}$ is even, thus initially the number of non-loony move pairs is even.
\F's strategy is then to keep the number of non-loony move pairs even at the start of every \T's turn.
Then, once the number of non-loony moves reaches zero (and the loony endgame is reached), it is \T's turn, and \F is in control.
Specifically, \F responds to \T's moves in the following way:
\begin{itemize}
    \item If \T follows optimal play and makes a loony move in a variable to set it to \true, then \F simply claims all boxes in the chain opened by \T (without making a double-dealing move), and makes a non-loony move in another variable to set it to \false.
    \item If \T deviates from her strategy by making a non-loony move, setting a variable to \false, there must be at least one other non-loony move pair available to \F.
    Therefore, \F claims the boxes opened by \T, and makes a non-loony move, thereby setting another variable to \false.
    The number of non-loony pairs is again even at the start of \T's next turn.
    \item If \T deviates from her strategy by opening a chain with a loony move that does not remove a non-loony pair, \F responds with claiming all but two (or four in case of a cycle) boxes and ends with a double-dealing move.
    The number of non-loony pairs remains even before \T's next turn.
\end{itemize}
Using this strategy, \F can set a variable to \false each time \T sets a variable to any value, as well as gain control in the loony endgame. 

\medskip

Note that the order of moves in these strategies is not enforced.
\T can play loony moves she would play in the loony endgame even if there are still non-loony moves available, as long as these moves do not interfere with the values set (or to be set) in the corresponding variables.
For \F it is optimal to simply respond to these moves as if the game was already in the loony endgame, since otherwise he would be in danger of losing control.
Indeed, if \F does not make a double-dealing move, the number of non-loony moves will no longer be even at the start of \T's turn, and \F loses control of the loony endgame.
Thus, it is not more beneficial for any player to make a move in any other gadget than the \variable gadgets while there are still variables that have not been set.

\begin{lemma}\label{lem:strategies}
Deviating from the strategies described above is sub-optimal for \F and cannot be more beneficial for \T.
\end{lemma}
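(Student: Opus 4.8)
# Proof Proposal for Lemma~\ref{lem:strategies}

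The plan is to prove optimality for each player separately, treating the two phases (variable-setting and loony endgame) together by arguing that any deviation can be punished. The central quantity to track is the \emph{parity of the number of non-loony move pairs} at the start of each of \T's turns, since this parity is exactly what determines who ends up in control once all non-loony moves are exhausted. I would first establish the invariant that \F can maintain: under \F's prescribed strategy, the number of non-loony pairs is even at the start of every \T turn, and consequently \F is in control when the loony endgame begins. This follows by induction on \T's turns, checking the three response cases in the bulleted strategy; each case either removes one non-loony pair (keeping the count even after \F also removes one) or removes none (with \F making a double-dealing move that preserves the count). The key observation is that whenever \T makes a non-loony move, a whole non-loony \emph{pair} is consumed, so at least one other pair remains for \F to respond with, precisely because the initial count $n$ is even.

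Next I would argue optimality for \F. Since every chain in the construction has length $\geq 4$ and every cycle length $\geq 8$ (guaranteed by the wire-length requirement in Section~\ref{sec:wires}), being in control yields \F at least as many boxes per chain/cycle as \T by the standard \emph{Winning Ways} accounting already invoked in the excerpt. Therefore any \F-deviation that surrenders control is strictly worse: if \F ever fails to make a double-dealing move when one is required, or fails to respond to a \T non-loony move with his own, the parity flips and \T gains control, after which \T claims the majority of boxes in every remaining chain and cycle. I would quantify this by comparing the two outcomes via Lemma~\ref{lem:max_cycles}: losing control costs \F more than the boxes he could gain by greedily claiming an extra chain, so deviation is sub-optimal.

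For \T, the argument is subtler because the claim is only that deviation \emph{cannot be more beneficial}, not that it is strictly worse. Here I would use the per-gadget lemmas (Lemmas~\ref{lem:wire:good}, \ref{lem:crossover:good}, \ref{lem:org:good}, \ref{lem:var:good}, \ref{lem:clause}) to show that the total number of disjoint cycles \T can collect depends only on the final signal assignment, not on the order in which cycles are selected, since the gadget bounds are local and additive. The crucial point is that \T can never flip a signal from \false to \true through any gadget without losing a cycle (the ``sub-optimal'' direction of each local lemma), so reordering or early cycle-opening gives \T no extra cycles. I would also note that \T opening chains or cycles early, before all variables are set, does not help: by the \F-optimality argument above, \F simply treats such moves as loony-endgame moves and responds with double-dealing, so \T gains nothing and risks nothing changes in her favor. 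Finally, if \F were to deviate first, \T can \emph{adopt} \F's abandoned strategy to seize control herself, which is the symmetric punishment mentioned in the ``Optimal strategy for \T'' paragraph.

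\medskip

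\noindent\textbf{Main obstacle.} The hardest part will be rigorously establishing the order-independence for \T: showing that no clever interleaving of loony moves across different gadgets, or partially opening overlapping cycles, lets \T extract more than the sum of the per-gadget bounds. This requires arguing that the gadgets interact only through their shared boundary cycles (the input/output cycles), so that the global maximum over disjoint-cycle selections decomposes into the local maxima already bounded by the gadget lemmas. I expect to handle this by viewing \cmax as a maximum independent set in the ``overlap graph'' of cycles and observing that this graph is a disjoint union of gadget-local pieces glued along single shared cycles, making the global optimum separable. The parity-invariant bookkeeping for \F is comparatively routine once the three response cases are verified.
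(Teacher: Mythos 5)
Your proposal follows essentially the same route as the paper: the parity invariant on non-loony pairs for \F, the observation that losing control is punishable (by \T adopting \F's strategy), the reduction of \T's endgame play to selecting \cmax via Lemma~\ref{lem:max_cycles}, and the per-gadget lemmas to rule out signal-flipping and order effects. The decomposition of the global disjoint-cycle maximum into gadget-local maxima glued along shared boundary cycles, which you flag as the main obstacle, is indeed the load-bearing step; the paper handles it more informally (by reordering \T's loony moves and observing they either match the intended play or destroy cycles), so your independent-set formulation is, if anything, a cleaner way to make that step rigorous.

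One concrete piece is missing from your treatment of \T's deviations. The endgame score is $4c+T-2k-4$ (Lemma~\ref{lem:max_cycles}), and the quantities $T$ and $k$ are \emph{not} fixed in advance: they depend on how each \variable gadget is resolved. Your argument tracks only $c$ (``the total number of disjoint cycles depends only on the final signal assignment''), so it does not by itself rule out that \T profits by setting a variable to \false herself, trading cycle structure for immediately claimed boxes or for a different degree profile. The paper closes this with an explicit accounting: a variable set to \true leaves three degree-$3$ boxes and gives \T zero boxes during setting, while a variable set to \false leaves two degree-$3$ boxes and gives \T one box, so the contribution $(T-2k)+(\text{boxes claimed while setting})$ of the value-setting component is the same either way, and \T's only reason to prefer one value over the other is satisfiability of $\mathcal{F}$. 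Without this step (or an equivalent invariance of the full score functional, not just of $c$), the claim that \T's deviations ``cannot be more beneficial'' is not yet established. The rest of your plan, including the \F side, is sound and matches the paper.
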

\begin{proof}
Trivially, \T and \F always claim open boxes before making their move, except when \F makes double-dealing moves.
Otherwise the opponent can claim these boxes in their next move.

First, consider the strategies in the loony endgame.
If \T deviates from her strategy and does not select the maximum number of disjoint cycles, by Lemma~\ref{lem:max_cycles} her score will be too low and she loses the game.
Therefore, the loony endgame strategy for \T as described above is optimal.

If, at any point in the loony endgame, except for his last move, \F does not make a double-dealing move, he loses control.
Since being in control is always beneficial in our construction, this play is sub-optimal.

The strategies described for before the loony endgame are also optimal.
Observe that, under the described strategies, the value-setting component of a variable yields the same number of boxes to \T independent whether it is set to \true or to \false.
Indeed, if it is set to \true, the component contains three boxes with degree $3$, while setting the variable to \true does not give any boxes to \T; if the variable is set to \false, the component contains two boxes with degree $3$, but setting the value gives \T one box.
Thus, the value-setting component contributes the same number of points to \T's final score independent of the value.

If \T deviates from her strategy by making a non-loony move and setting a variable to \false, she loses one box to \F.
Furthermore, setting a variable to \false can never help \T to satisfy formula $\mathcal{F}$.
Thus, such a move is sub-optimal.

If \T deviates from her strategy by making a loony move in any other gadget than the variable gadget, there are two options: either she makes a move that leads to the same score as the strategy described above, or she makes a move that contradicts the setting of the variables and reduces her total score.
The former case does not have any bad repercussions for \T. 
\F will respond with a double-dealing move, otherwise \T would take control of the endgame.
Thus, we can reorder the sequence of \T's moves and assume that she first sets all the variables.
However, in the latter case, the move reduces the number of possible disjoint cycles, and thus leads to \T's loss in the game.
Therefore, deviating from the strategy above is never more beneficial for \T.

If \F deviates from his strategy before the loony endgame, then \T can adopt his strategy and ensure that the number of non-loony move pairs is even at the start of each of \F's turn.
Since, if \F is not in control of the loony endgame, he loses the game, deviating from his strategy is not optimal.
\end{proof}

\begin{theorem}
\DB is PSPACE-complete.
\end{theorem}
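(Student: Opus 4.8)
The plan is to prove the two inclusions separately. For membership in PSPACE I would use the standard argument for bounded two-player games. On a board with $N$ boxes there are $O(N)$ edges, and each move draws a previously undrawn edge, so every play terminates after $O(N)$ moves. Deciding whether \T has a winning strategy is then a minimax evaluation of a game tree of depth $O(N)$; a recursive procedure that stores only the current position together with the recursion stack uses space $O(N^2)$, which is polynomial in the board size. Hence \DB lies in PSPACE.

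For hardness I would verify that the construction of Sections~\ref{sec:pspace}--\ref{sec:construction} is a polynomial-time reduction from \gpos. Given a positive CNF formula $\mathcal{F}$ with $n$ variables and $m$ clauses (with $n$ even, after adding dummy variables if necessary), the construction uses one \variable gadget per variable, a \wire gadget for each variable--clause incidence, one \org gadget per multi-literal clause, one \crossover gadget per wire crossing in a fixed planar layout, and one \clause gadget per clause. There are polynomially many gadgets, each of polynomial size, and the initial score is a single integer computed from the gadget counts, so the whole instance is produced in polynomial time.

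It remains to show that \T has a winning strategy in the resulting \DB instance iff she has one in the \gpos instance. By Lemma~\ref{lem:strategies}, under optimal play the first $n$ moves set all variables---\T setting hers to \true and \F his to \false, in alternation---after which the game is a loony endgame with \F in control; this variable-setting phase is in exact correspondence with a play of \gpos. By Lemma~\ref{lem:max_cycles}, \T's loony-endgame score equals $4c+T-2k-4$ with $c=|\cmax|$, and since exactly $n/2$ variables are \true and $n/2$ are \false the quantities $k$ and $T$ are independent of the assignment, so \T's score depends only on $c$. Lemmas~\ref{lem:wire:good}, \ref{lem:crossover:good}, \ref{lem:org:good} and~\ref{lem:var:good} each show that a gadget contributes its full cycle count to \cmax unless a signal is flipped from \false to \true inside it, in which case it contributes one fewer; summing over all gadgets, \T attains the global maximum $c$ iff no signal is ever flipped from \false to \true. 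Consequently the signal reaching each clause is exactly the one determined by the variable assignment (flipping \true to \false is locally harmless but can only lose \T a satisfied clause, so it is never beneficial). By Lemma~\ref{lem:clause}, a \clause gadget then yields four boxes precisely when a \true signal reaches it, i.e.\ when the clause is satisfied, and two otherwise. Since the initial score is tuned so that \T's final total exceeds \F's by exactly one iff every clause is satisfied, \T wins the \DB instance iff the chosen assignment satisfies $\mathcal{F}$, i.e.\ iff \T wins the \gpos instance.

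The main obstacle is the step that passes from per-gadget optimality to board-wide optimality: one must argue that the maximum number of disjoint cycles over the entire board equals the sum of the per-gadget maxima and is realized simultaneously, and that this simultaneous maximum forces globally consistent signals. This is exactly where the monotone structure of \gpos is used---only \false-to-\true flips lose cycles, and such a flip is an attempt to satisfy a clause not satisfied by the current assignment---so that \T's optimal achievable score is pinned to the satisfiability of $\mathcal{F}$ under the assignment produced in the variable-setting phase, while Lemma~\ref{lem:strategies} guarantees that neither player benefits from interleaving endgame moves with variable-setting moves.
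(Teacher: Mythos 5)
Your proposal is correct and follows essentially the same route as the paper: the same standard PSPACE-membership argument for bounded games, and the same hardness argument that chains Lemma~\ref{lem:strategies} (optimal play sets variables first, then enters a loony endgame with \F in control), Lemma~\ref{lem:max_cycles} (score determined by the number of disjoint cycles), the per-gadget Lemmas~\ref{lem:wire:good}--\ref{lem:clause} (maximum cycle count attainable iff no signal flips from \false to \true), and the tuned initial score. Your explicit remarks on polynomial-time constructibility and on the passage from per-gadget to board-wide optimality make the argument slightly more detailed than the paper's, but do not change the approach.
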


\begin{proof}
A game of \DB is finished after a polynomial number of turns.
Thus, all possible sequences of moves can be explored using polynomial sized memory.
This implies that \DB is in PSPACE.

We now show that \DB is PSPACE-hard.
Given a \gpos formula $\mathcal{F}$, we construct a \DB instance $\delta$ following the description above.
We argue that \T can win $\mathcal{F}$ if and only if \T can win $\delta$.

If \T can win $\mathcal{F}$, then there must be a variable assignment following the \gpos rules such that every clause is connected to at least one variable which has been set to \true.
Therefore, there can be at most $n/2$ variables that need to be set to \true by \T.
Hence, \T can set the corresponding \variable gadgets in $\delta$ to \true, and if needed set the remaining variables available to her to \true in any order.
Thus, by Lemmas~\ref{lem:wire:good}--\ref{lem:clause}, \T can propagate the \true values down to all the clauses, that is, she can select the maximum number of disjoint cycles from all the gadgets, including all the clause gadgets, leading to the winning score in $\delta$.

In order for \T to win $\delta$, the set of disjoint cycles \cmax that she selects must contain a cycle from every \clause gadget, and the maximum number of cycles from all the other gadgets.
By Lemmas~\ref{lem:wire:good}--\ref{lem:clause}, this can be done only if the output signals from each gadget conform to their input signals, and thus there must be a set of \variable gadgets set to \true whose signal is propagated all the way down to all the \clause gadgets. 
In $\delta$ \T and \F have to alternate choosing which \variable gadgets get set to \true and \false, respectively. 
This assignment can be used as a winning strategy for \T to win \gpos game on $\mathcal{F}$.

Thus, \DB is PSPACE-complete.
\end{proof}

\section{Conclusion}
In this paper we proved \DB to be PSPACE-complete, resolving a long-standing open problem. 

There exist a number of other intriguing open problems related to \DB. Does restricting the game to a $k \times n$ grid for a small $k$ make the game easier? How large does $k$ need to be to make the problem PSPACE-hard or even just NP-hard? These are challenging questions, given that even for a $1 \times n$ grid \DB is not yet fully understood~\cite{collette2015narrow,unsolved2002,JobsonSWW17}.

 Another direction of further research is the computational complexity of variants of \DB, in particular misère \DB~\cite{collette2015narrow}, of \DB on other grids or even of variants of \DB with more than two players as it was originally described by Lucas~\cite{lucas1883recreations}.  One variant that our result resolves is \emph{Dots \& Polygons}, since the reduction from \DB to \emph{Dots \& Polygons} that was used to prove NP-hardness~\cite{buchin_et_al:LIPIcs:2020:12237} now directly also shows PSPACE-hardness.

Our result can be interpreted as proving that \emph{Strings and coins} restricted to grid graphs is PSPACE-complete. What is the complexity of \emph{Strings and coins} on other restricted graph classes, for instance outerplanar graphs (which generalize $1 \times n$ grids)?

This may also be a good moment to revisit other games, which are known to be PSPACE-complete on general graphs, but for which the complexity on grid graphs is open. 
This, for instance, includes \emph{NoGo}, \emph{Fjords} (on hexagonal grids), \emph{Cats-and-Dogs} and \emph{GraphDistance}, which are known to be PSPACE-complete for planar graphs~\cite{BurkeH19,BurkeHHH16}. 

\newpage
\bibliography{References}

\begin{thebibliography}{10}

\bibitem{AICHHOLZER200542}
Oswin Aichholzer, David Bremner, Erik~D. Demaine, Ferran Hurtado, Evangelos
  Kranakis, Hannes Krasser, Suneeta Ramaswami, Saurabh Sethia, and Jorge
  Urrutia.
\newblock Games on triangulations.
\newblock {\em Theoretical Computer Science}, 343(1):42--71, 2005.
\newblock Game Theory Meets Theoretical Computer Science.
\newblock \href {https://doi.org/https://doi.org/10.1016/j.tcs.2005.05.007}
  {\path{doi:https://doi.org/10.1016/j.tcs.2005.05.007}}.

\bibitem{albert2019lessons}
Michael~H. Albert, Richard~J. Nowakowski, and David Wolfe.
\newblock {\em Lessons in play: an introduction to combinatorial game theory}.
\newblock CRC Press, 2019.

\bibitem{barker2011solving}
Joseph Barker and Richard Korf.
\newblock Solving 4x5 dots-and-boxes.
\newblock In {\em Proc.\ 25th AAAI Conference on Artificial Intelligence},
  pages 1756--1757, 2011.

\bibitem{barker2012solving}
Joseph Barker and Richard Korf.
\newblock Solving dots-and-boxes.
\newblock In {\em Proc.\ 26th AAAI Conference on Artificial Intelligence},
  pages 414--419, 2012.

\bibitem{berlekamp2000dots}
Elwyn~R. Berlekamp.
\newblock {\em The {Dots and Boxes} Game: Sophisticated Child's Play}.
\newblock AK Peters/CRC Press, 2000.

\bibitem{winning_ways}
Elwyn~R. Berlekamp, John~H. Conway, and Richard~K. Guy.
\newblock Chapter 16: Dots-and-boxes.
\newblock In {\em Winning Ways for your Mathematical Plays}, volume~3, pages
  541--584. A K Peters/CRC Press, 2nd edition, 2003.

\bibitem{berlekamp2000forcing}
Elwyn~R. Berlekamp and Katherine Scott.
\newblock Forcing your opponent to stay in control of a loony {Dots-and-Boxes}
  endgame.
\newblock In R.~J. Nowakowski, editor, {\em More Games of No Chance,
  \emph{Proc. MSRI Workshop on Combinatorial Games}}, volume~42, pages
  317--330. Cambridge University Press, 2002.

\bibitem{buchin_et_al:LIPIcs:2020:12237}
Kevin Buchin, Mart Hagedoorn, Irina Kostitsyna, Max van Mulken, Jolan Rensen,
  and Leo van Schooten.
\newblock Dots \& polygons (media exposition).
\newblock In {\em Proc.\ 36th Internat.\ Sympos. Computational Geometry (SoCG
  2020)}, pages 79:1--79:4, 2020.
\newblock \href {http://arxiv.org/abs/2004.01235} {\path{arXiv:2004.01235}},
  \href {https://doi.org/10.4230/LIPIcs.SoCG.2020.79}
  {\path{doi:10.4230/LIPIcs.SoCG.2020.79}}.

\bibitem{BurkeH19}
Kyle Burke and Robert~A. Hearn.
\newblock {PSPACE}-complete two-color planar placement games.
\newblock {\em Int. J. Game Theory}, 48(2):393--410, 2019.
\newblock \href {https://doi.org/10.1007/s00182-018-0628-8}
  {\path{doi:10.1007/s00182-018-0628-8}}.

\bibitem{BurkeHHH16}
Kyle Burke, Silvia Heubach, Melissa~A. Huggan, and Svenja Huntemann.
\newblock Keeping your distance is hard.
\newblock {\em CoRR}, abs/1605.06801, 2016.
\newblock \href {http://arxiv.org/abs/1605.06801} {\path{arXiv:1605.06801}}.

\bibitem{buzzard2014playing}
Kevin Buzzard and Michael Ciere.
\newblock Playing simple loony dots-and-boxes endgames optimally.
\newblock {\em INTEGERS}, 14:2, 2014.

\bibitem{collette2015narrow}
S{\'e}bastien Collette, Erik~D. Demaine, Martin~L Demaine, and Stefan
  Langerman.
\newblock Narrow misere {Dots-and-Boxes}.
\newblock {\em Games of No Chance 4}, 63:57, 2015.

\bibitem{demaine2001playing}
Erik~D. Demaine.
\newblock Playing games with algorithms: Algorithmic combinatorial game theory.
\newblock In {\em International Symposium on Mathematical Foundations of
  Computer Science}, pages 18--33. Springer, 2001.

\bibitem{uno}
Erik~D. Demaine, Martin~L. Demaine, Ryuhei Uehara, Takeaki Uno, and Yushi Uno.
\newblock {UNO} is hard, even for a single player.
\newblock In Paolo Boldi and Luisa Gargano, editors, {\em Fun with Algorithms},
  pages 133--144. Springer, 2010.

\bibitem{nimstring_pspace}
Erik~D. {Demaine} and Yevhenii {Diomidov}.
\newblock {Strings-and-Coins and Nimstring are PSPACE-complete}.
\newblock {\em arXiv e-prints}, January 2021.
\newblock \href {http://arxiv.org/abs/2101.06361} {\path{arXiv:2101.06361}}.

\bibitem{demainehearnplayagain}
Erik~D. Demaine and Robert~A. Hearn.
\newblock Playing games with algorithms: algorithmic combinatorial game theory.
\newblock In {\em Games of No Chance 3}, pages 3--56. Cambridge University
  Press, 2009.
\newblock \href {http://arxiv.org/abs/cs/0106019} {\path{arXiv:cs/0106019}}.

\bibitem{Epp}
David Eppstein.
\newblock Computational complexity of games and puzzles.
\newblock Last accessed on 06/05/2021.
\newblock URL: \url{https://www.ics.uci.edu/~eppstein/cgt/hard.html}.

\bibitem{FLAKE2002895}
Gary~William Flake and Eric~B. Baum.
\newblock {Rush Hour} is {PSPACE}-complete, or ``{Why} you should generously
  tip parking lot attendants''.
\newblock {\em Theoretical Computer Science}, 270(1):895--911, 2002.
\newblock \href {https://doi.org/https://doi.org/10.1016/S0304-3975(01)00173-6}
  {\path{doi:https://doi.org/10.1016/S0304-3975(01)00173-6}}.

\bibitem{unsolved2002}
K.~Guy and Richard~J. Nowakowski.
\newblock Unsolved problems in combinatorial games.
\newblock In R.~J. Nowakowski, editor, {\em More Games of No Chance,
  \emph{Proc. MSRI Workshop on Combinatorial Games}}, volume~42, page
  457–473. Cambridge University Press, 2002.

\bibitem{bhearn}
Robert~A. Hearn.
\newblock {\em Games, puzzles, and computation}.
\newblock PhD thesis, Massachusetts Institute of Technology, Cambridge, MA,
  USA, 2006.

\bibitem{gpc}
Robert~A. Hearn and Erik~D. Demaine.
\newblock {\em Games, puzzles, and computation}.
\newblock CRC Press, 2009.

\bibitem{horiyama2017sankaku}
Takashi Horiyama, Takashi Iizuka, Masashi Kiyomi, Yoshio Okamoto, Ryuhei
  Uehara, Takeaki Uno, Yushi Uno, and Yukiko Yamauchi.
\newblock Sankaku-tori: An old western-japanese game played on a point set.
\newblock {\em Journal of Information Processing}, 25:708--715, 2017.

\bibitem{hsieh2007fairness}
Ming~Yu Hsieh and Shi-Chun Tsai.
\newblock On the fairness and complexity of generalized k-in-a-row games.
\newblock {\em Theoretical Computer Science}, 385(1-3):88--100, 2007.

\bibitem{reversi}
Shigeki Iwata and Takumi Kasai.
\newblock The {Othello} game on an $n\times n$ board is {PSPACE}-complete.
\newblock {\em Theoretical Computer Science}, 123(2):329--340, 1994.
\newblock \href {https://doi.org/https://doi.org/10.1016/0304-3975(94)90131-7}
  {\path{doi:https://doi.org/10.1016/0304-3975(94)90131-7}}.

\bibitem{JobsonSWW17}
Adam~S. Jobson, Levi Sledd, Susan~Calcote White, and D.~Jacob Wildstrom.
\newblock Variations on narrow dots-and-boxes and dots-and-triangles.
\newblock {\em Integers}, 17:G2, 2017.

\bibitem{johnson2014combinatorial}
Will Johnson.
\newblock The combinatorial game theory of well-tempered scoring games.
\newblock {\em International Journal of Game Theory}, 43(2):415--438, 2014.

\bibitem{knittel2007concept}
Anthony Knittel, Terry Bossomaier, and Allan Snyder.
\newblock Concept accessibility as basis for evolutionary reinforcement
  learning of dots and boxes.
\newblock In {\em 2007 IEEE Symposium on Computational Intelligence and Games},
  pages 140--145. IEEE, 2007.

\bibitem{lichtenstein1980go}
David Lichtenstein and Michael Sipser.
\newblock Go is polynomial-space hard.
\newblock {\em Journal of the ACM (JACM)}, 27(2):393--401, 1980.

\bibitem{lucas1883recreations}
{\'E}douard Lucas.
\newblock {\em R{\'e}cr{\'e}ations math{\'e}matiques}, volume~2.
\newblock Gauthier-Villars et fils, 1883.

\bibitem{MeRo1988}
Henry Meyniel and Jean-Pierre Roudneff.
\newblock The vertex picking game and a variation of the game of dots and
  boxes.
\newblock {\em Discrete Math.}, 70:311--313, 1988.

\bibitem{Now1991}
Richard~J. Nowakowski.
\newblock $\ldots$, {W}elter's game, {S}ylver coinage, dots-and-boxes,
  $\ldots$.
\newblock In R.~K. Guy, editor, {\em Combinatorial Games, \emph{Proc. Symp.
  Appl. Math.}}, volume~43, pages 155--182. Amer. Math. Soc., 1991.

\bibitem{schaefer}
Thomas~J. Schaefer.
\newblock On the complexity of some two-person perfect-information games.
\newblock {\em Journal of Computer and System Sciences}, 16(2):185--225, 1978.

\bibitem{siegel2013combinatorial}
Aaron~N. Siegel.
\newblock {\em Combinatorial game theory}, volume 146.
\newblock American Mathematical Soc., 2013.

\bibitem{Wes1996a}
Julian West.
\newblock Championship-level play of dots-and-boxes.
\newblock In R.~J. Nowakowski, editor, {\em Games of No Chance, \emph{Proc.
  MSRI Workshop on Combinatorial Games}}, volume~29, pages 79--84. Cambridge
  University Press, 1996.

\bibitem{zhuang2015improving}
Yimeng Zhuang, Shuqin Li, Tom~Vincent Peters, and Chenguang Zhang.
\newblock Improving monte-carlo tree search for dots-and-boxes with a novel
  board representation and artificial neural networks.
\newblock In {\em 2015 IEEE Conference on Computational Intelligence and Games
  (CIG)}, pages 314--321. IEEE, 2015.

\end{thebibliography}

\appendix

\newpage

\section{Example game}
\begin{figure}[h]
    \centering
    \includegraphics[page=44]{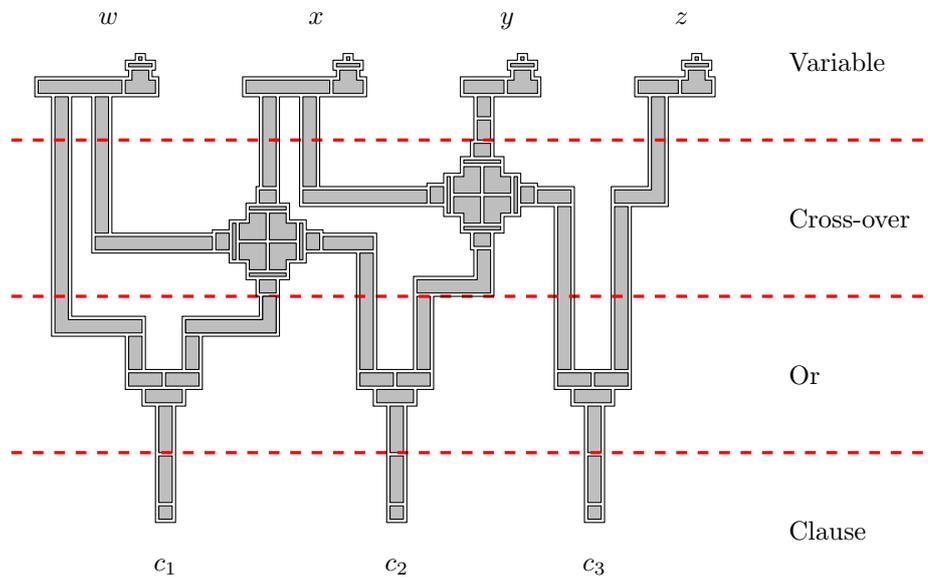}
    \caption{Example reduction from the \gpos formula $(w \vee x) \wedge (w \vee y) \wedge (x \vee z)$. The construction can be divided into four sections: a \variable, \crossover, \org, and \clause section. Each section contains only the corresponding gadgets and \wire gadgets that connect different gadgets together.}
    \label{fig:example-game}
\end{figure}

\clearpage
\section{Omitted proofs}

\lemmaxcycles*
\begin{proof}
Let Fred be in control of the game.
To simplify the argument, w.l.o.g., we assume that the last move made by \T is made in a cycle.
Let $c$ denote the number loony moves made by \T in a disjoint cycle and let $\ell$ be the number of loony moves made by \T in chains. All but the last loony move in a disjoint cycle or chain yield 4 or 2 boxes for \T, respectively. Thus, the score gained by \T in the loony endgame is \[4c + 2\ell - 4.\]
Consider the dual graph $G=(V,E)$ to the \DB instance.
In it, a node corresponds to a box, and an edge connects two nodes if the two corresponding adjacent boxes do not have a line drawn between them.
Suppose $G$ has $k$ nodes with degree higher than $2$.
We define $T$ to be the sum of the degrees of these nodes:
\[T = \sum_{\{v \in V | \mathit{degree}(v) > 2\}} \mathit{degree}(v).\]
A loony move on a disjoint cycle does not change $T$, since all disjoint cycles only contain boxes of degree 2.
A loony move on a chain, however, decreases the degree of the box at both ends of the chain by 1.
Furthermore, whenever the degree of a box reduces from 3 to 2 the degree of this box is no longer counted in $T$.
Thus \[T = 2\ell + 2k,\] which means the score for \T will be \[4c + T - 2k - 4.\]
Since $T$ and $k$ are fixed, the score is maximized when the number of loony moves in disjoint cycles is maximized.
\end{proof}
\end{document}